\theoremstyle{plain}
\numberwithin{equation}{section}
\theoremstyle{plain}
\newtheorem{thm}{Theorem}[section]
\theoremstyle{plain}
\theoremstyle{definition}
\newtheorem{defn}[thm]{Definition}
\theoremstyle{definition}
\theoremstyle{plain}
\theoremstyle{plain}
\newtheorem{cor}[thm]{Corollary}
\theoremstyle{definition}
\begin{document}
\title{Quantum monotone metrics induced from trace non-increasing maps and additive noise}
\author{
	Koichi Yamagata%
	\thanks{koichi.yamagata@uec.ac.jp}\\
	{The University of Electro-Communications Department of Informatics,}\\
		{1-5-1, Chofugaoka, Chofu, Tokyo 182-8585, Japan} 
}%
\date{}

\maketitle

\begin{abstract}
Quantum monotone metric was introduced by Petz, and it was proved
that quantum monotone metrics on the set of quantum states with trace
one were characterized by operator monotone functions. Later, these
were extended to monotone metrics on the set of positive operators
whose traces are not always one based on completely positive, trace
preserving (CPTP) maps. 
It was shown that these extended monotone metrics were characterized by operator monotone functions continuously
parameterized by traces of 
positive operators,
and did not have some ideal properties
such as monotonicity and convexity with respect to 
the positive operators.
In this paper, we introduce another extension of quantum monotone metrics
which have monotonicity under completely positive, trace non-increasing
(CPTNI) maps and additive noise. We prove that our extended monotone
metrics can be characterized only by static operator monotone functions
from few assumptions without assuming continuities of metrics. 
We show that our monotone metrics have some natural properties such as
additivity of direct sum, convexity and monotonicity with respect
to 
positive operators.
\end{abstract}

\maketitle

\if 0
\theoremstyle{plain}
\newtheorem{thm}{Theorem}
\theoremstyle{definition}
\newtheorem{defn}[thm]{Definition}
\theoremstyle{plain}
\newtheorem{cor}[thm]{Corollary}
\fi

\if 0
\theoremstyle{defn}
\newtheorem{defn}{Definition}
\fi

\newcommand{\argmax}{\mathop{\rm arg~max}\limits}
\newcommand{\argmin}{\mathop{\rm arg~min}\limits}

\global\long\def\E{\mathcal{E}}%
\global\long\def\S{\mathcal{S}}%
\global\long\def\R{\mathbb{R}}%
\global\long\def\C{\mathbb{C}}%
\global\long\def\N{\mathbb{N}}%
\global\long\def\D{\mathcal{D}}%
\global\long\def\M{\mathcal{M}}%
\global\long\def\X{\mathcal{X}}%
\global\long\def\F{\mathcal{F}}%
\global\long\def\B{\mathcal{B}}%
\global\long\def\T{\mathcal{T}}%
\global\long\def\para{\xi}%
\global\long\def\P{\mathcal{P}}%
\global\long\def\Para{\Xi}%
\global\long\def\H{\mathcal{H}}%
\global\long\def\Tr{{\rm Tr}\,}%
\global\long\def\Ch{\mathcal{C}}%
\global\long\def\bR{\mathbf{R}}%
\global\long\def\bL{\mathbf{L}}%
\global\long\def\ket#1{\left|#1\right\rangle }%
\global\long\def\bra#1{\left\langle #1\right|}%
\global\long\def\braket#1#2{\left\langle #1\mid#2\right\rangle }%

\if 0
\begin{quotation}
The ``lead paragraph'' is encapsulated with the \LaTeX\ 
\verb+quotation+ environment and is formatted as a single paragraph before the first section heading. 
(The \verb+quotation+ environment reverts to its usual meaning after the first sectioning command.) 
Note that numbered references are allowed in the lead paragraph.
The lead paragraph will only be found in an article being prepared for the journal \textit{Chaos}.
\end{quotation}
\fi

\section{Introduction}

In classical statistics, Cencov \cite{cencov} proved that the Fisher
information metric is the only Riemannian metric on families of probabilities
up to rescaling that has monotonicity under Markov maps. Petz extended
Cencov's theorem to families of quantum states, and it was revealed
that there is a one-to-one correspondence between quantum monotone
metrics and operator monotone functions \cite{petz}. To introduce
Petz's characterization of monotone metrics,
let us define some notations.

Let $\B(\C^{n})$ be the set of all linear operators on $\C^{n}$, and let 
$\S^{++}(\C^{n})=\left\{ \rho\in\B(\C^{n})\mid\Tr\rho=1,\,\rho>0\right\} $
be the set of all strictly positive quantum states on $\C^{n}$. The
quantum analog of Markov map can be expressed as completely positive,
trace preserving map (CPTP map) $T:\B(\C^{n})\rightarrow\B(\C^{m})$,
and it can be represented by operator sum representation
\[
T(X)=\sum_{i=1}^{k}A_{i}XA_{i}^{*}
\]
with linear maps $A_{1},\dots,A_{k}$ from $\C^{n}$ to $\C^{m}$
such that
\[
\sum_{i=1}^{k}A_{i}^{*}A_{i}=I.
\]
We denote by $\Ch(\C^{n},\C^{m})$ the set of all CPTP maps from $\B(\C^{n})$
to $\B(\C^{m})$. Let $\bL_{\rho}$ and $\bR_{\rho}$ be super operators
on $\B(\C^{n})$ defined by
\begin{align*}
\bL_{\rho}(X) & =\rho X,\\
\bR_{\rho}(X) & =X\rho,
\end{align*}
with a strictly positive operator $\rho$. 

Petz defined quantum monotone metric as follows.
\begin{defn}
	\label{def:petz}
	A family of functions $\{ K_{\centerdot}^{(n)}(\cdot,\cdot) \}_{n\in\N^+}$ 
	from $\S^{++}(\C^n) \times \B(\C^n) \times \B(\C^n)$ to $\C$
	is a family of monotone metrics, if 
	 the following conditions hold:
	\begin{description}
		\item [{(a)}] For every $n\in\N^{+}$ and $\rho\in\S^{++}(\C^n)$ the map 
		\[
		K^{(n)}_{\rho}: \B(\C^n)\times\B(\C^n) \to \C 
		\qquad (X,Y)\mapsto K^{(n)}_{\rho}(X,Y) 
		\] 
		is an inner product.
		\item [{(b)}] 
		For every $n,m\in\N^+$,  
		$X\in\B(\C^{n})$, 
		CPTP map $T\in\Ch(\C^{n},\C^{m})$,
		$\rho\in\S^{++}(\C^{n})$ 
		such that $T(\rho)\in\S^{++}(\C^{m})$, 
		the inequality
		\[
		K^{(m)}_{T(\rho)}\left(T(X),T(X)\right)\leq K^{(n)}_{\rho}\left(X,X\right)
		\]
		holds.
		\item [{(c)}] 
		For every  $n\in\N^+$ and $X\in\B(\C^{n})$,
		the map $\rho\mapsto K^{(n)}_{\rho}\left(X,X\right)$ is continuous.
	\end{description}
\end{defn}

Any monotone metric can be characterized by operator monotone functions
as follows. (See Appendix \ref{sec:ope_monotone} for a brief account of operator monotone functions. 
See \cite{Bhatia} for more details.)
\begin{thm}[Petz\cite{petz}]
	$\{ K_{\centerdot}^{(n)}(\cdot,\cdot) \}_{n\in\N^+} $ 
	is 
	a family of monotone metrics if and only if there
	exists an operator monotone function $f:(0,\infty)\rightarrow(0,\infty)$
	and a non-negative constant $c\in\R$ such that
	\begin{equation}
	K^{(n)}_{\rho}(X,Y)=\Tr X^{*}[(\bR_{\rho}f(\bL_{\rho}\bR_{\rho}^{-1}))^{-1}Y]+c(\Tr X)^{*}(\Tr Y) \label{eq:petz}
	\end{equation}
	for $n\in\N^+$, $\rho\in \S^{++}(\C^n)$, and $X,Y\in\B(\C^n)$.
\end{thm}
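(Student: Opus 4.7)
The strategy is to reduce the problem to identifying a single operator monotone function $f$ by progressively stronger use of axiom (b). Axiom (a) with the Riesz representation theorem on $\B(\C^n)$ equipped with the Hilbert-Schmidt inner product produces, for each $n$ and each $\rho\in\S^{++}(\C^n)$, a positive invertible superoperator $J^{(n)}_\rho$ such that
\[
K^{(n)}_\rho(X,Y)=\Tr X^*(J^{(n)}_\rho)^{-1}(Y).
\]
The goal is then to show $(J^{(n)}_\rho)^{-1}$ equals $(\bR_\rho f(\bL_\rho \bR_\rho^{-1}))^{-1}$ plus a rank-one piece on the trace direction that yields $c(\Tr X)^*(\Tr Y)$.

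First I would exploit unitary conjugation: applying axiom (b) to $T_U(X)=UXU^*$ and its CPTP inverse $T_{U^*}$ forces equality in the monotonicity bound, giving the covariance $J^{(n)}_{U\rho U^*}(UYU^*)=U J^{(n)}_\rho(Y)U^*$. Taking $U$ to commute with $\rho=\sum_i \lambda_i |i\rangle\langle i|$ shows that in this eigenbasis every off-diagonal matrix unit $E_{ij}=|i\rangle\langle j|$ with $i\neq j$ is an eigenvector of $J^{(n)}_\rho$, with eigenvalue $\psi^{(n)}(\lambda_i,\lambda_j)$ depending only on the eigenvalue pair; the diagonal block $\mathrm{span}\{E_{ii}\}$ is preserved and is analyzed separately.

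Next I would promote $\psi^{(n)}$ to a dimension-independent, homogeneous function. The isometric embedding $X\mapsto X\otimes\sigma$ into $\B(\C^n\otimes\C^m)$ (for a fixed auxiliary state $\sigma$) and its left inverse, the partial trace, are both CPTP; axiom (b) applied in both directions produces equality on the image and thereby relates $\psi^{(n)}$ to $\psi^{(nm)}$. Combining this with direct-sum channels $X\mapsto X\oplus 0$ (suitably completed) and with conditional expectations between commuting subalgebras yields both dimension-independence $\psi^{(n)}=\psi$ and the scaling relation $\psi(ta,tb)=t^{-1}\psi(a,b)$ for $t>0$. One then writes $\psi(a,b)=(b\,f(a/b))^{-1}$ for a positive function $f:(0,\infty)\to(0,\infty)$, while an analogous analysis of the diagonal block isolates the nonnegative scalar $c$ in front of $(\Tr X)^*(\Tr Y)$.

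Finally, testing axiom (b) on a generic CPTP map $T(X)=\sum_k A_k X A_k^*$ through its Stinespring dilation converts the inequality $K^{(m)}_{T(\rho)}(T(X),T(X))\leq K^{(n)}_\rho(X,X)$ into a matrix inequality whose validity in every finite dimension is equivalent to the Loewner condition $f(A)\leq f(B)$ for $0<A\leq B$, i.e., operator monotonicity of $f$; axiom (c) secures the continuity needed to interpret this condition on all of $(0,\infty)$. The main obstacle is the third step, where one must construct enough CPTP-map pairs whose monotonicity inequalities saturate to pin $\psi$ down exactly and to separate the trace-direction contribution $c(\Tr X)^*(\Tr Y)$ from the off-diagonal Petz-form contribution, so that the Kubo-Ando operator mean structure $\psi(a,b)=(b f(a/b))^{-1}$ emerges unambiguously.
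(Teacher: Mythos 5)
There is a genuine gap --- in fact the paper never proves this statement at all (it is quoted from Petz and used as background; the only proofs given in the paper are for the CPTNI analogue), so your sketch has to stand on its own, and as written it does not. First, you only address the ``only if'' direction: nothing in the proposal verifies that the formula $\Tr X^{*}[(\bR_{\rho}f(\bL_{\rho}\bR_{\rho}^{-1}))^{-1}Y]+c(\Tr X)^{*}(\Tr Y)$ is actually monotone under every CPTP map, and this direction is not free --- it needs the Kubo--Ando operator-mean machinery (the transformer inequality $C(A\,m_{h}B)C^{*}\leq(CAC^{*})\,m_{h}(CBC^{*})$ and joint monotonicity), exactly as the paper deploys in its ``if'' part for the CPTNI theorem. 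Second, the steps you yourself flag as ``the main obstacle'' are precisely where the CPTP setting bites: in the paper's CPTNI proof the key identifications are made with maps that are not trace preserving (the corner compression $T_{1}$, the map sending everything but $E_{11}^{(n)}$ to zero, additive noise, and monotonicity of $\rho\mapsto K_{\rho}$), none of which are available here. To show that the eigenvalue attached to $E_{ij}$ depends only on $(\lambda_i,\lambda_j)$, is dimension independent, and is homogeneous, you must route everything through genuinely trace-preserving constructions (pinchings onto block-diagonal subalgebras, classical stochastic redistributions of the remaining spectrum, ancilla-tensor/partial-trace pairs) and then invoke axiom (c) to pass from rational to real scalings; your sketch names some of these ingredients but explicitly defers the construction, so the central identification $\psi(a,b)=(b\,f(a/b))^{-1}$ is never actually established.

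Two further points are skipped rather than proved. The diagonal block is where the constant $c$ lives: in the CPTP setting $K_{\rho}(E_{11}^{(n)},E_{22}^{(n)})$ does not vanish (its value is exactly $c$), unlike in the paper's CPTNI argument where a trace-decreasing map forces it to zero, so ``an analogous analysis of the diagonal block isolates the nonnegative scalar $c$'' conceals what is essentially the classical Chentsov-type argument, including the consistency of the diagonal Fisher coefficient with $f(1)$ and the sign constraint on $c$. Finally, ``testing axiom (b) on a generic CPTP map through its Stinespring dilation converts the inequality into a matrix inequality \ldots equivalent to the Loewner condition'' is an assertion, not an argument: the known derivations of operator monotonicity use concrete block constructions (e.g.\ $\bar{\rho}=\mathrm{Diag}(\epsilon I,\rho)$ with an off-diagonal $\bar{X}$, as in the paper's step (iii)), and in the CPTP setting the comparison of $K_{\bar{\rho}}$ with $K_{\bar{\rho}'}$ for $\rho\leq\rho'$ must itself be implemented by admissible channels with matching traces, a construction your proposal does not supply. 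Until the ``if'' direction, the saturation arguments pinning down $\psi$, the diagonal/Chentsov part, and the operator-monotonicity step are written out, this is an outline of the standard strategy rather than a proof.
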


Note that the second term in (\ref{eq:petz}) does not appear when
$X$ and $Y$ are derivatives of parameterized density operators whose
traces are fixed to one, and $\Tr X=\Tr Y=0$.  However, if we consider
a extended parametric family containing states whose traces are not
fixed to one, the traces of derivatives are not always zero.

In a previous study, monotone metrics on the quantum state family
$\S^{++}(\C^{n})$ are extended to strictly positive operators $\B^{++}(\C^{n})$
in Ref. \cite{kumagai}. 
In their study, 
a family of functions 
$\{ K^{(n)}_{\centerdot}(\cdot,\cdot) \}_{n\in\N^+}$
from $\B^{++}(\C^n) \times \B(\C^n) \times \B(\C^n)$ to $\C$
is a family of monotone metrics if following conditions hold:
\begin{description}
	\item [{(a)}] For every $n\in\N^{+}$ and $\rho\in\B^{++}(\C^n)$ the map 
	\[
	K^{(n)}_{\rho}: \B(\C^n)\times\B(\C^n) \to \C 
	\qquad (X,Y)\mapsto K^{(n)}_{\rho}(X,Y) 
	\] 
	is an inner product.
	\item [{(b)}] 
	For every $n,m\in\N^+$,  
	$X\in\B(\C^{n})$, 
	CPTP map $T\in\Ch(\C^{n},\C^{m})$,
	$\rho\in\B^{++}(\C^{n})$ 
	such that $T(\rho)\in\B^{++}(\C^{m})$, 
	the inequality
	\[
	K^{(m)}_{T(\rho)}\left(T(X),T(X)\right)\leq K^{(n)}_{\rho}\left(X,X\right)
	\]
	holds.
	\item [{(c)}] 
	For every  $n\in\N^+$ and $X\in\B(\C^{n})$,
	the map $\rho\mapsto K^{(n)}_{\rho}\left(X,X\right)$ is continuous.
\end{description}
These conditions are same as Definition \ref{def:petz} except that
$\S^{++}(\C^{n})$ is replaced by $\B^{++}(\C^{n})$. They proved
that $\{ K^{(n)}_{\centerdot}(\cdot,\cdot) \}_{n\in\N^+}$ is a family of monotone metrics if and only if
there exist a continuous function $b:\R^{++}\rightarrow\R$ and a
continuous family of operator monotone functions $\{f_{t}:\R^{++}\rightarrow\R^{++}\}_{t\in\R^{++}}$
such that
\begin{equation}
K^{(n)}_{\rho}(X,Y)=\Tr X^{*}[(\bR_{\rho}f_{\Tr\rho}(\bL_{\rho}\bR_{\rho}^{-1}))^{-1}Y]+b(\Tr\rho)(\Tr X)^{*}(\Tr Y)\label{eq:kuma}
\end{equation}
with $f_{t}(1)^{-1}+t\,b(t)>0$,
for $n\in\N^+$, $\rho\in\B^{++}(\C^n)$, and $X,Y\in\B(\C^n)$.  
To distinguish this metric from
ours defined later, we call it CPTP monotone metric. It can be seen
that this CPTP monotone  metric does not have some desirable properties.
For example, $K^{(n)}_{\rho}(X,X)$ is not convex with respect to $\rho$
in general. Further, it does not have the additivity with respect
to direct sum:
\[
K^{(n_1+n_2)}_{\rho_{1}\oplus\rho_{2}}
\left(X_{1}\oplus X_{2},X_{1}\oplus X_{2}\right)
=K^{(n_1)}_{\rho_{1}} \left(X_{1},X_{1}\right)
+K^{(n_2)}_{\rho_{2}} \left(X_{2},X_{2}\right).
\]
This means that the inner product structures are different between
whole and part. 

In this study, we introduce another extension of quantum monotone
metrics which have monotonicity under completely positive, trace non-increasing
maps. We prove that our extended monotone metrics can be characterized
by fixed operator monotone functions from few assumptions without
assuming continuities of metrics. We show that our monotone metrics
have some natural properties such as additivity of direct sum, convexity
and monotonicity with respect to unnormalized states.

\section{Quantum monotone metrics induced from CPTNI maps and additive noise}

In quantum mechanics, a quantum operation $T$ is used to describe
transformations of quantum states, and it must satisfy $0\leq\Tr T(\rho)\leq1$
for any quantum state $\rho$ to be physical\cite{nielsen,caves}. 
A quantum operations
can be expressed as a completely positive, trace non-increasing (CPTNI)
maps $T:\B(\C^{n})\rightarrow\B(\C^{m})$, and it can be represented
by operator sum representation
\[
T(X)=\sum_{i=1}^{k}A_{i}XA_{i}^{*}
\]
with linear maps $A_{1},\dots,A_{k}$ from $\C^{n}$ to $\C^{m}$
such that
\[
\sum_{i=1}^{k}A_{i}^{*}A_{i}\leq I.
\]
If there is a state $\rho$ such that $\Tr [T(\rho)] < 1$, then the quantum operation $T$ 
does not provide a complete description of processes that may occur in a system, and
$\Tr[T(\rho)]$ is equal to the probability 
that $T$ occurs\cite{nielsen}.
In this sense, unnormalized states having traces less than one
can be interpreted  as 
results of incomplete quantum operations. 
We denote by $\overline{\Ch}(\C^{n},\C^{m})$ the set of all CPTNI maps
from $\B(\C^{n})$ to $\B(\C^{m})$. We denote by
$\overline{\S}^{++}(\C^{n}):=\left\{ \rho\in\B(\C^n)\mid\Tr\rho\leq1,\,\rho>0\right\} $
and $\overline{\S}(\C^{n}):=\left\{ \rho\in\B(\C^n)\mid\Tr\rho\leq1,\,\rho\geq0\right\} $
the set of all strictly and non-strictly positive operators with traces
less than one. To extend monotone metric, it is natural to consider
a condition $K_{T(\rho)}\left(T(X),T(X)\right)\leq K_{\rho}\left(X,X\right)$
for every CPTNI map $T$. 

When the trace of  an unnormalized quantum state $\rho$ is less than one, $\rho+\sigma$
may also physical with $\sigma\geq0$ such that $\Tr(\rho+\sigma)\leq1$.
In this case, $\sigma$ is considered to be noise for $\rho$. Therefore
a metric on $\overline{\S}^{++}(\C^{n})$ which has monotonicity under
noise should satisfy $K_{\rho+\sigma}\left(X,X\right)\leq K_{\rho}\left(X,X\right)$.
In this inequality, $X$ does not need to be changed under this kind
of noise because the derivative of $\rho+tX+\sigma$ with respect
to $t$ is $X$. Note that this kind of noise is not necessary to
be considered when only normalized states and CPTP maps are treated
because $\rho\mapsto T(\rho)+(\Tr\rho-\Tr T(\rho))\sigma$ is a CPTP
map for $\sigma\in\S^{++}(\C^{m})$ and $T\in\overline{\Ch}(\C^{n},\C^{m})$,
in fact, it has an operator sum representation
\[
\sum_{i=1}^{k}A_{i}\rho A_{i}^{*}+\sum_{st}B_{st}\rho B_{st}^{*}
\]
where
\[
B_{st}=\sqrt{\lambda_{s}}\text{\ensuremath{\ket{e_{s}}\bra{e_{t}}}}\sqrt{I-\sum_{i=1}^{k}A_{i}^{*}A_{i}},
\]
with the spectral decomposition $\sigma=\sum_{s}\lambda_{s}\ket{e_{s}}\bra{e_{s}}$
and the operator sum representation $T(X)=\sum_{i=1}^{k}A_{i}XA_{i}^{*}$.

Based on the above considerations, we define quantum monotone metrics
which have monotonicity under CPTNI maps as follows.
\begin{defn}
\label{def:monotone_cptni}
A family of functions $\{ K_{\centerdot}^{(n)}(\cdot,\cdot) \}_{n\in\N^+}$ 
from $\overline{\S}^{++}(\C^n) \times \B(\C^n) \times \B(\C^n)$ to $\C$
is a family of monotone metrics, if 
the following conditions hold:
\begin{description}
	\item [{(a)}] For every $n\in\N^{+}$ and $\rho\in\overline{\S}^{++}(\C^n)$ the map 
	\[
	K^{(n)}_{\rho}: \B(\C^n)\times\B(\C^n) \to \C 
	\qquad (X,Y)\mapsto K^{(n)}_{\rho}(X,Y) 
	\] 
	is an inner product.
	\item [{(b)}] 
	For every $n,m\in\N^+$,  CPTNI map $T\in\overline{\Ch}(\C^{n},\C^{m})$, $X\in\B(\C^{n})$,
	$\rho\in\overline{\S}^{++}(\C^{n})$, and $\sigma\in\overline{\S}(\C^{m})$
	such that $T(\rho)+\sigma\in\overline{\S}^{++}(\C^{m})$,
	the inequality
	\begin{equation}\label{eq:cptni_mono}
	K^{(m)}_{T(\rho)+\sigma}\left(T(X),T(X)\right)\leq K^{(n)}_{\rho}\left(X,X\right)
	\end{equation}
	holds.
\end{description}
\end{defn}

We call these metrics CPTNI monotone metrics to distinguish them from
metrics based on CPTP maps (\ref{eq:kuma}). We prove the following
Theorem.
\begin{thm}
$\{ K^{(n)}_{\centerdot}(\cdot,\cdot)\}_{n\in\N^+}$ is a family of CPTNI monotone metrics 
if and only if there exists an operator monotone function $f:(0,\infty)\rightarrow(0,\infty)$
such that
\begin{equation}
K^{(n)}_{\rho}(X,Y)=\Tr X^{*}[(\bR_{\rho}f(\bL_{\rho}\bR_{\rho}^{-1}))^{-1}Y]\label{eq:petz2}
\end{equation}
for $n\in\N^+$, $\rho\in\overline{\S}^{++}(\C^n)$, and $X,Y\in\B(\C^n)$.
\end{thm}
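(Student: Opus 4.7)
\emph{Sufficiency.} The formula clearly defines an inner product for each $\rho$, since $\bL_\rho,\bR_\rho$ are commuting positive operators on $(\B(\C^n),\langle\cdot,\cdot\rangle_\mathrm{HS})$ and $f>0$. To verify (\ref{eq:cptni_mono}), extend any CPTNI map $T$ with Kraus operators $\{A_i\}$ (so $Q:=\sum_i A_i^*A_i\le I$, $Q':=I-Q$) to the CPTP map $\hat T:\B(\C^n)\to\B(\C^{m+1})$ defined by $\hat T(X)=T(X)\oplus\Tr(XQ')$, essentially the construction recalled in the excerpt. Since $\hat T(\rho)=T(\rho)\oplus\Tr(\rho Q')$ is block-diagonal and the Petz formula (\ref{eq:petz2}) is additive under direct sums of states and operators, Petz's classical CPTP monotonicity applied to $\hat T$ gives
\begin{equation*}
K^{(m)}_{T(\rho)}(T(X),T(X))+\frac{|\Tr(XQ')|^2}{\Tr(\rho Q')\,f(1)}=K^{(m+1)}_{\hat T(\rho)}(\hat T(X),\hat T(X))\le K^{(n)}_\rho(X,X),
\end{equation*}
and dropping the nonnegative second term handles the $\sigma=0$ subcase. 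For $\sigma\ne 0$ one uses the cone-monotonicity $K^{(m)}_{T(\rho)+\sigma}(Y,Y)\le K^{(m)}_{T(\rho)}(Y,Y)$, a consequence of the Chentsov--Morozova function $c(a,b)=1/(bf(a/b))$ being jointly non-increasing for any operator monotone $f>0$ (since $f'\ge 0$ and $f(x)/x$ is non-increasing).

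\emph{Necessity.} The converse rests on the embedding invariance: for every $\rho\in\overline{\S}^{++}(\C^n)$ with $\Tr\rho<1$,
\begin{equation*}
K^{(n)}_\rho(X,Y)=K^{(n+1)}_{\rho\oplus(1-\Tr\rho)}(X\oplus 0,\,Y\oplus 0).
\end{equation*}
The $\le$ direction is (\ref{eq:cptni_mono}) applied to the CPTP embedding $X\mapsto X\oplus 0$ with the additive noise $\sigma=0\oplus(1-\Tr\rho)$, which makes the output land in $\S^{++}(\C^{n+1})$; the $\ge$ direction is (\ref{eq:cptni_mono}) applied to the CPTNI compression $Y\mapsto V^*YV$ (with $V:\C^n\hookrightarrow\C^{n+1}$ the canonical isometry) and $\sigma=0$. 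This reduces the family to its restriction to normalized states, where axioms (a),(b) become exactly Petz's (CPTP being the $\sigma=0$ subcase of (b)); Petz's theorem therefore yields
\begin{equation*}
K^{(n)}_\rho(X,Y)=\Tr X^*[(\bR_\rho f(\bL_\rho \bR_\rho^{-1}))^{-1}Y]+c(\Tr X)^*(\Tr Y)\qquad(\rho\in\S^{++}(\C^n))
\end{equation*}
for some operator monotone $f$ and $c\ge 0$, and the embedding invariance extends the same formula to all of $\overline{\S}^{++}(\C^n)$. To kill $c$, take $n=2$, the CPTNI map $T(Y)=\ket{0}\bra{0}Y\ket{0}\bra{0}=Y_{00}\ket{0}\bra{0}$, $X=\mathrm{diag}(1,-\epsilon)$ (so $\Tr T(X)=1$ while $\Tr X=1-\epsilon$), $\rho=\mathrm{diag}(p_0,p_1)\in\overline{\S}^{++}(\C^2)$, and $\sigma=\mathrm{diag}(0,s_1)$ with $p_0+s_1\le 1$; substituting into (\ref{eq:cptni_mono}) with the derived formula, everything cancels except
\begin{equation*}
c(2\epsilon-\epsilon^2)\le\frac{\epsilon^2}{p_1 f(1)},
\end{equation*}
so dividing by $\epsilon$ and letting $\epsilon\to 0^+$ at fixed $p_1>0$ forces $c=0$.

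\emph{Main obstacle.} The step I expect to be the main obstacle is the invocation of Petz's theorem in the necessity argument: as originally stated it requires continuity of $\rho\mapsto K^{(n)}_\rho(X,X)$ (Definition~\ref{def:petz}(c)), which is \emph{not} among the axioms of Definition~\ref{def:monotone_cptni}. I would aim to derive this continuity from CPTNI-plus-noise monotonicity itself, using the scaling maps $T(X)=\lambda X$ for $0<\lambda\le 1$, which yield $\lambda^2 K^{(n)}_{\lambda\rho}(X,X)\le K^{(n)}_\rho(X,X)$, combined with additive-noise monotonicity to sandwich $K^{(n)}_\rho(X,X)$ between values at dilated and slightly perturbed nearby states, and unitary CPTP invariance to spread continuity across unitary orbits; failing a clean bootstrap, one would have to reprove Petz's classification from scratch under the present (stronger) monotonicity hypothesis.
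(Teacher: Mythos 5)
Your proposal takes a genuinely different route from the paper. The paper never invokes Petz's theorem: it reproves the classification from scratch (phase unitaries kill the cross terms, a pinching CPTNI map plus additive noise forces $K^{(n)}_{\rho}(E^{(n)}_{11},E^{(n)}_{22})=0$, tensoring with $I^{(m)}/m$ gives the homogeneity $g(qp_1,qp_2)=q^{-1}g(p_1,p_2)$ first for rational $q$ and then for all $q$ by monotonicity --- this is precisely how it dispenses with continuity --- and a block state $\bar\rho=\epsilon I\oplus\rho$ shows $f$ is operator monotone), while in the ``if'' direction it proves monotonicity directly, first for $f=1$ and $f=x$ by Schur complements and then for general $f$ via operator means (Appendix \ref{sec:ope_mean}). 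Your reduction to normalized states via $\rho\mapsto\rho\oplus(1-\Tr\rho)$, the appeal to Petz, and the computation that kills $c$ are correct and attractive; but two steps need repair before the argument is complete.

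First, the noise step in your sufficiency proof is justified by an argument that does not work as stated. You reduce $\sigma\neq0$ to the claim that the quadratic form (\ref{eq:petz2}) is non-increasing under $T(\rho)\mapsto T(\rho)+\sigma$, and you justify this by the joint scalar monotonicity of $c(a,b)=1/(b f(a/b))$. That only settles the case where $\sigma$ commutes with $T(\rho)$: in general $T(\rho)+\sigma$ has a different eigenbasis, and ``$\rho\le\rho'$ implies $K_\rho\ge K_{\rho'}$'' is an operator statement that does not follow from pointwise monotonicity of the Morozova--Chentsov kernel alone. The fix is essentially the paper's: write the metric superoperator as the operator mean $\bL_\rho^{-1}m_h\bR_\rho^{-1}$ with $h(x)=x/f(x)$ operator monotone, note $\rho\le\rho'$ gives $\bL_\rho\le\bL_{\rho'}$ and $\bR_\rho\le\bR_{\rho'}$ as positive superoperators on the Hilbert--Schmidt space, and use antitonicity of the inverse together with joint monotonicity of operator means. (A smaller point in the same direction: you apply ``Petz's CPTP monotonicity'' at unnormalized $\rho$; this is harmless because the first term of the formula is homogeneous of degree $-1$ in $\rho$ and CPTP maps preserve the trace, but it should be said.)

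Second, in the necessity half the invocation of Petz's theorem requires axiom (c) of Definition \ref{def:petz}, which Definition \ref{def:monotone_cptni} does not supply; you flag this as the main obstacle but leave it as a sketch. The sketch can in fact be completed: for $\rho'$ near $\rho>0$ choose $\lambda<1$ with $\lambda\rho\le\rho'$ and $\lambda\rho'\le\rho$ (possible once $\|\rho'-\rho\|\le(1-\lambda)\lambda_{\min}(\rho)$); additive-noise monotonicity gives $K^{(n)}_{\rho'}(X,X)\le K^{(n)}_{\lambda\rho}(X,X)$ and the CPTNI scaling $T(X)=\lambda X$ gives $\lambda^{2}K^{(n)}_{\lambda\rho}(X,X)\le K^{(n)}_{\rho}(X,X)$, so $\lambda^{2}K^{(n)}_{\rho}(X,X)\le K^{(n)}_{\rho'}(X,X)\le\lambda^{-2}K^{(n)}_{\rho}(X,X)$, and letting $\lambda\to1$ yields continuity on $\S^{++}(\C^n)$. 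With that lemma written out, your route goes through and is arguably shorter than the paper's, at the cost of importing Petz's theorem as a black box, whereas the paper's hands-on derivation is self-contained and makes the absence of any continuity hypothesis transparent.
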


Note that the continuous condition is not necessary to characterize
CPTNI monotone metrics unlike Definition \ref{def:petz}. Further,
the second term in RHS of (\ref{eq:petz}) does not appear in this
theorem, and the operator monotone $f$ does not depend on $\Tr\rho$
unlike CPTP monotone metrics (\ref{eq:kuma}). 
\begin{proof}[The proof of ``only if'' part]
The inequality \eqref{eq:cptni_mono} implies the unitary covariance 
\[
K^{(n)}_{\rho}(X,X)=K^{(n)}_{U\rho U^{*}}(UXU^{*},UXU^{*})
\]
 for any unitary operator $U$ because $U$ is an invertible CPTNI map, and 
 \[
 K^{(n)}_{\rho}(X,Y)=K^{(n)}_{U\rho U^{*}}(UXU^{*},UYU^{*})
 \]
 due to the polarization identity of the inner product. 
So we can assume $\rho={\rm Diag}(p_{1},\dots,p_{n})$ is diagonal, 
 since we can freely choose such basis for further computations which consists of the
 eigenvectors 
 of $\rho$.
 We denote by $E_{ij}^{(n)}$ the matrix
unit in $\B(\C^{n})$. 
We characterize all elements $K^{(n)}_{\rho}(E_{ij}^{(n)},E_{kl}^{(n)})$
of $K^{(n)}_{\rho}$ for $i,j,k,l\in\{1,\dots,n\}$ by characterizing the
following four types of elements:
\begin{description}
\item [{(i)}] $K^{(n)}_{\rho}(E_{12}^{(n)},E_{kl}^{(n)})$ ($k\not=1$)
\item [{(ii)}] $K^{(n)}_{\rho}(E_{11}^{(n)},E_{22}^{(n)})$
\item [{(iii)}] $K^{(n)}_{\rho}(E_{12}^{(n)},E_{12}^{(n)})$
\item [{(iv)}] $K^{(n)}_{\rho}(E_{11}^{(n)},E_{11}^{(n)})$
\end{description}
Other elements of $K^{(n)}_{\rho}$ can be obtained by replacing bases of
the Hilbert space $\C^{n}$.

(i) Let $U={\rm Diag(c,1,\dots,1)}\in\B(\C^{n})$ be an unitary operator
with $|c|=1$. When $k\not=1$, we have
\[
K^{(n)}_{U\rho U^{*}}(UE_{12}^{(n)}U^{*},UE_{kl}^{(n)}U^{*})=\begin{cases}
\bar{c}^{2}K^{(n)}_{\rho}(E_{12}^{(n)},E_{kl}^{(n)}) & \text{if }l=1,\\
\bar{c}K^{(n)}_{\rho}(E_{12}^{(n)},E_{kl}^{(n)}) & \text{if }l\not=1.
\end{cases}
\]
Therefore $K^{(n)}_{\rho}(E_{12}^{(n)},E_{kl}^{(n)})=0$. 

(ii) Let $T\in\overline{\Ch}(\C^{n},\C^{n})$ be a CPTNI map defined
by
\[
T(E_{ij}^{(n)})=\begin{cases}
E_{11}^{(n)} & \text{if }(i,j)=(1,1),\\
0 & \text{if }(i,j)\not=(1,1),
\end{cases}
\]
and let $\sigma={\rm Diag}(0,\lambda_{2},\dots,\lambda_{n})\in\overline{\S}(\C^{n})$.
By the definition of CPTNI monotone metrics, for any $\lambda\in\R$,
\begin{align*}
K^{(n)}_{\rho}(E_{11}^{(n)}+\lambda E_{22}^{(n)},E_{11}^{(n)}+\lambda E_{22}^{(n)}) & \geq K^{(n)}_{T(\rho)+\sigma}(T(E_{11}^{(n)}+\lambda E_{22}^{(n)}),T(E_{11}^{(n)}+\lambda E_{22}^{(n)}))\\
 & =K^{(n)}_{\rho}(E^{(n)}_{11},E^{(n)}_{11}).
\end{align*}
This means $E_{11}^{(n)}$ and $E_{22}^{(n)}$ are orthogonal with
respect to the inner product $K^{(n)}_{\rho}$, that is, 
\[
K^{(n)}_{\rho}(E_{11}^{(n)},E_{22}^{(n)})=0.
\]
Note that the second terms of RHSs of (\ref{eq:petz}) and (\ref{eq:kuma})
are vanished here by using CPTNI maps and additive noise. 

(iii) By using a CPTNI map $T_{1}\in\overline{\Ch}(\C^{n},\C^{2})$ defined
by
\[
T_{1}(E_{ij}^{(n)})=\begin{cases}
E_{ij}^{(2)} & \text{if }i,j\leq2,\\
O & \text{otherwise},
\end{cases}
\]
we have
\[
K^{(n)}_{\rho}(E_{12}^{(n)},E_{12}^{(n)})\ge K^{(n)}_{T_{1}(\rho)}(T_{1}(E_{12}^{(n)}),T_{1}(E_{12}^{(n)}))=K^{(2)}_{{\rm {\rm Diag}}\left(p_{1},p_{2}\right)}(E_{12}^{(2)},E_{12}^{(2)}).
\]
By using a CPTNI map $T_{2}\in\overline{\Ch}(\C^{2},\C^{n})$ defined
by 
\[
T_{2}(E_{ij}^{(2)})=E_{ij}^{(n)}
\]
 and $\sigma={\rm Diag}(0,0,\lambda_{3},\dots,\lambda_{n})\in\overline{\S}(\C^{n})$,
we have
\[
K^{(2)}_{{\rm {\rm Diag}}\left(p_{1},p_{2}\right)}(E_{12}^{(2)},E_{12}^{(2)})\ge K^{(2)}_{T_{2}({\rm {\rm Diag}}\left(p_{1},p_{2}\right))+\sigma}(T_{2}(E_{12}^{(2)}),T_{2}(E_{12}^{(2)}))=K^{(n)}_{\rho}(E_{12}^{(n)},E_{12}^{(n)}).
\]
Therefore 
\[
K^{(n)}_{\rho}(E_{12}^{(n)},E_{12}^{(n)})=K^{(2)}_{{\rm {\rm Diag}}\left(p_{1},p_{2}\right)}(E_{12}^{(2)},E_{12}^{(2)})=:g(p_{1},p_{2})
\]
depends only on $p_{1},p_{2}$. Note that this fact is different from
CPTP monotone metrics (\ref{eq:kuma}) which depends on a trace of
$\rho$. 

Let $S_{1}\in\overline{\Ch}(\C^{2},\C^{2}\otimes\C^{m})$ be a CPTNI
map defined by
\begin{equation}
S_{1}(E^{(2)})=E^{(2)}\otimes I^{(m)}/m\label{eq:S1}
\end{equation}
for any $E^{(2)}\in\B(\C^{2})$ with an identity operator $I^{(m)}\in\B(\C^{m})$,
and let $S_{2}\in\overline{\Ch}(\C^{2}\otimes\C^{m},\C^{2})$ be a CPTNI
map defined by
\begin{equation}
S_{2}(E^{(2m)})={\rm Tr}_{2}E^{(2m)}\label{eq:S2}
\end{equation}
for any $E^{(2m)}\in\B(\C^{2}\otimes\C^{m})$ where ${\rm Tr}_{2}$
is a partial trace with respect to $\C^{m}$. Because $S_{2}(S_{1}(E^{(2)}))=E^{(2)}$,
we have
\begin{align*}
g(p_{1},p_{2}) & =K^{(2)}_{{\rm {\rm Diag}}\left(p_{1},p_{2}\right)}(E_{12}^{(2)},E_{12}^{(2)})=K^{(2 m)}_{{\rm {\rm Diag}}\left(p_{1},p_{2}\right)\otimes I^{(m)}/m}(E_{12}^{(2)}\otimes I^{(m)}/m,E_{12}^{(2)}\otimes I^{(m)}/m)\\
 & =\frac{1}{m}g\left(\frac{p_{1}}{m},\frac{p_{2}}{m}\right),
\end{align*}
for any $m\in\mathbb{N}$. From this, we have, for any rational number
$q=\frac{m_{1}}{m_{2}}$ with $m_{1},m_{2}\in\N^+$ such that $0<qp_{1}+qp_{2}\leq1$,
\[
g(qp_{1},qp_{2})=g(\frac{m_{1}}{m_{2}}p_{1},\frac{m_{1}}{m_{2}}p_{2})=\frac{1}{m_{1}}g(\frac{1}{m_{2}}p_{1},\frac{1}{m_{2}}p_{2})=\frac{m_{2}}{m_{1}}g(p_{1},p_{2})=\frac{1}{q}g(p_{1},p_{2}).
\]
Because a function $q\mapsto g(qp_{1},qp_{2})$ monotonously decreases
due to the definition \ref{def:monotone_cptni}, we have
\begin{equation}
g(qp_{1},qp_{2})=\frac{1}{q}g(p_{1},p_{2})\label{eq:rational_g}
\end{equation}
 for any real number $q\in(0,\frac{1}{p_{1}+p_{2}}]$. Note that we
don't require the continuity of $K^{(n)}_{\rho}$ to obtain (\ref{eq:rational_g}). 

We can define a function $f:(0,\infty)\rightarrow(0,\infty)$ such
that 
\[
g(p_{1},p_{2})=\frac{1}{p_{2}f(p_{1}/p_{2})}
\]
because $p_{2}g(p_{1},p_{2})$ depends only on $p_{1}/p_{2}$ due
to (\ref{eq:rational_g}). We prove $f$ is an operator monotone function.
Let
\[
\bar{X}=\begin{pmatrix}0 & 0\\
X & 0
\end{pmatrix}\in\B(\C^{2m})
\]
and
\[
\bar{\rho}=\begin{pmatrix}\epsilon I^{(m)} & 0\\
0 & \rho
\end{pmatrix}\in\overline{\S}^{++}(\C^{2m})
\]
are an observable and an unnormalized state represented by block matrices with a
positive real number $\epsilon$. It follows that
\[
K^{(2 m)}_{\bar{\rho}}\left(\bar{X},\bar{X}\right)=\Tr X^{*}\left[\epsilon f(\frac{\rho}{\epsilon})\right]^{-1}X.
\]
Because of definition \ref{def:monotone_cptni}, $\rho\leq\rho'$
implies $f(\frac{\rho}{\epsilon})\leq f(\frac{\rho'}{\epsilon})$.
Therefore $f$ is an operator monotone function. Note that every operator
monotone function on $(0,\infty)$ is continuous and operator concave.
(See Appendix \ref{sec:ope_monotone}). 

(iv) By a similar discussion as (iii), $K^{(n)}_{\rho}(E_{11}^{(n)},E_{11}^{(n)})=:g(p_{1})$
depends only on $p_{1}$. It follows that
\[
g(p_{1})=K^{(2)}_{\frac{1}{2}p_{1}I^{(2)}}(\frac{1}{2}I^{(2)},\frac{1}{2}I^{(2)})=K^{(2)}_{\frac{1}{2}p_{1}I^{(2)}}(\frac{1}{2}X,\frac{1}{2}X)=\frac{1}{p_{1}f(1)},
\]
with $X=\begin{pmatrix}0 & 1\\
1 & 0
\end{pmatrix}$, because eigenvalues of $X$ are $\pm1$.
\end{proof}
Note that the above proof derives continuity of the metric $K^{(n)}_{\rho}(X,X)$
with respect to $\rho$ from only a few assumptions. Also note that,
unlike (\ref{eq:kuma}), the variety of CPTNI monotone metrics depends
only on an operator monotone function $f$.

\begin{proof}[The proof of ``if'' part]
When $f(x)=1$, we prove a metric defined by
\begin{equation}
K^{(n)}_{\rho}(X,X)=\Tr X^{*}\left[\bR_{\rho}^{-1}X\right]=\Tr X\rho^{-1}X^{*}\label{eq:RLD}
\end{equation}
is a CPTNI monotone metric. Because
\[
\begin{pmatrix}\rho & X^{*}\\
X & X\rho^{-1}X^{*}
\end{pmatrix}\geq0,
\]
it follows
\[
\begin{pmatrix}T(\rho)+\sigma & T(X^{*})\\
T(X) & T(X\rho^{-1}X^{*})
\end{pmatrix}\geq\begin{pmatrix}T(\rho) & T(X^{*})\\
T(X) & T(X\rho^{-1}X^{*})
\end{pmatrix}\geq0
\]
for any CPTNI map $T\in\overline{\Ch}(\C^{n},\C^{m})$ and a positive operator $\sigma$. Then it
follows
\[
T(X)\left(T(\rho)+\sigma\right)^{-1}T(X^{*})\leq T(X\rho^{-1}X^{*})
\]
by considering the Schur complement of the block matrix. Because $T$
is trace non-increasing,
\begin{align*}
K^{(m)}_{T(\rho)+\sigma}(T(X),T(X)) & =\Tr T(X)\left(T(\rho)+\sigma\right)^{-1}T(X^{*})\\
 & \leq\Tr T(X\rho^{-1}X^{*})\leq\Tr X\rho^{-1}X^{*}\\
 & =K^{(n)}_{\rho}(X,X).
\end{align*}
Therefore $(K^{(n)}_{\centerdot}(\cdot,\cdot))_{n\in\N^+}$ is a family of CPTNI monotone metrics. Similarly, when
$f(x)=x$, 
\begin{equation}
K^{(n)}_{\rho}(X,X)=\Tr X^{*}\left[\bL_{\rho}^{-1}X\right]=\Tr X^{*}\rho^{-1}X\label{eq:LLD}
\end{equation}
is also CPTNI operator metric. 

When $f:(0,\infty)\rightarrow(0,\infty)$ is any operator monotone
function, $h(x)=\frac{x}{f(x)}$ is also operator monotone (See Appendix
\ref{sec:ope_monotone}), and $K^{(n)}_{\rho}(X,X)$ can be rewritten to
\begin{equation}
K^{(n)}_{\rho}(X,X)=\Tr X^{*}\left[(\bL_{\rho}^{-1}m_{h}\bR_{\rho}^{-1})X\right],\label{eq:ope_mean}
\end{equation}
where
\[
A\,m_{h}B=A^{1/2}\,h(A^{-1/2}BA^{-1/2})A^{1/2}
\]
is operator mean  of strictly positive operators
$A,B$ on a Hilbert space \cite{kubo_ando}. Operator mean of non-negative operators
$A$ and $B$ is $A\,m_{h}B=\lim_{\epsilon\ssearrow0}(A+\epsilon I)\,m_{h}(B+\epsilon I)$.
It is known that operator means fulfill inequalities
\[
A\,m_{h}B\leq A'\,m_{h}B'
\]
if $A\leq A'$ and $B\leq B'$, and
\[
C(A\,m_{h}B)C^{*}\leq(CAC^{*})\,m_{h}(CBC^{*})
\]
for any operator $C$. (See Appendix \ref{sec:ope_mean} for a brief
account of operator means.) By using them, we have
\begin{align*}
K^{(m)}_{T(\rho)+\sigma}(T(X),T(X)) & =\Tr T(X^{*})\left[(\bL_{T(\rho)+\sigma}^{-1}m_{h}\bR_{T(\rho)+\sigma}^{-1})T(X)\right]\\
 & =\Tr X^{*}\left[T^{*}(\bL_{T(\rho)+\sigma}^{-1}m_{h}\bR_{T(\rho)+\sigma}^{-1})T\right]X\\
 & \leq\Tr X^{*}\left[\left(T^{*}\bL_{T(\rho)+\sigma}^{-1}T\right)m_{h}\left(T^{*}\bR_{T(\rho)+\sigma}^{-1}T\right)\right]X\\
 & \leq\Tr X^{*}\left[\bL_{\rho}^{-1}m_{h}\bR_{\rho}^{-1}\right]X=K^{(n)}_{\rho}(X,X),
\end{align*}
where CPTNI monotonicity of (\ref{eq:RLD}) and (\ref{eq:LLD}) are
used in the last inequality. 
This proves $(K^{(n)}_{\centerdot}(\cdot,\cdot))_{n\in\N^+}$
is a family of CPTNI monotone
metrics for any operator monotone function $f$. 
\end{proof}
\begin{cor}
\label{cor:direct_sum}For any CPTNI monotone metric,
\[
K^{(n_1+n_2)}_{\rho_{1}\oplus\rho_{2}}\left(X_{1}\oplus X_{2},Y_{1}\oplus Y_{2}\right)
=K^{(n_1)}_{\rho_{1}}\left(X_{1},Y_{1}\right)+K^{(n_2)}_{\rho_{2}}\left(X_{2},Y_{2}\right)
\]
with $\rho_{i}\in\overline{\S}^{++}(\C^{n_{i}})$ and $X_{i},Y_{i},\in\B(\C^{n_{i}})$
for $i=1,2$. 
\end{cor}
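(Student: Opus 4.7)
My plan is to invoke the characterization theorem just proved, which gives the explicit form
\[
K^{(n)}_\rho(X,Y) = \Tr X^{*}\bigl[(\bR_\rho f(\bL_\rho \bR_\rho^{-1}))^{-1} Y\bigr]
\]
for some fixed operator monotone $f$. The corollary then reduces to the purely algebraic identity that, for block-diagonal data, the super-operator $(\bR_\rho f(\bL_\rho \bR_\rho^{-1}))^{-1}$ preserves the block-diagonal subspace and acts on it block-wise via the corresponding single-block super-operators.

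First I would pass to a convenient basis. Since each $\rho_i$ is strictly positive, diagonalize $\rho_i = \sum_j p^{(i)}_j \ket{e^{(i)}_j}\bra{e^{(i)}_j}$, so that $\rho := \rho_1 \oplus \rho_2$ is diagonal in the concatenated orthonormal basis of $\C^{n_1+n_2}$. Let $\{E_{jk}\}$ denote the matrix units in this basis. The super-operators $\bL_\rho$ and $\bR_\rho$ commute (indeed $\bL_A \bR_B = \bR_B \bL_A$ always) and are simultaneously diagonalized by $\{E_{jk}\}$, with eigenvalues $p_j$ and $p_k$ respectively. By functional calculus,
\[
(\bR_\rho f(\bL_\rho \bR_\rho^{-1}))^{-1} E_{jk} = \frac{1}{p_k\, f(p_j/p_k)}\, E_{jk}.
\]
Now $Y_1 \oplus Y_2$ expands in matrix units $E_{jk}$ whose indices lie in a common block, and the eigenvalue $(p_k f(p_j/p_k))^{-1}$ is precisely what the single-block super-operator $(\bR_{\rho_i} f(\bL_{\rho_i} \bR_{\rho_i}^{-1}))^{-1}$ assigns to the corresponding matrix unit inside block $i$. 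Consequently
\[
(\bR_\rho f(\bL_\rho \bR_\rho^{-1}))^{-1}(Y_1 \oplus Y_2) = W_1 \oplus W_2, \qquad W_i := (\bR_{\rho_i} f(\bL_{\rho_i} \bR_{\rho_i}^{-1}))^{-1} Y_i.
\]

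The conclusion then follows from block multiplicativity of the trace, $\Tr (X_1 \oplus X_2)^{*}(W_1 \oplus W_2) = \Tr X_1^{*} W_1 + \Tr X_2^{*} W_2$, which is exactly the claimed additivity. The main (rather mild) obstacle is justifying that $f(\bL_\rho \bR_\rho^{-1})$ does not mix blocks; this is clean once one notes that the matrix units themselves form the joint eigenbasis of $\bL_\rho$ and $\bR_\rho$, so the functional calculus acts by scalar multiplication on each $E_{jk}$ and hence cannot transfer weight between diagonal and off-diagonal blocks. One small housekeeping point is that $\Tr\rho_1+\Tr\rho_2$ may exceed $1$, placing $\rho_1\oplus\rho_2$ outside $\overline{\S}^{++}(\C^{n_1+n_2})$; this is dispatched by the scaling relation $K^{(n)}_{c\rho}=c^{-1}K^{(n)}_\rho$, immediate from the explicit formula, which reduces the statement to the in-domain case.
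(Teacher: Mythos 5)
Your proof is correct and follows essentially the route the paper intends: the paper states this corollary without proof as an immediate consequence of the characterization formula \eqref{eq:petz2}, and your block-wise functional-calculus computation (matrix units as joint eigenvectors of $\bL_{\rho}$ and $\bR_{\rho}$, eigenvalue $1/(p_{k}f(p_{j}/p_{k}))$, then additivity of the trace) is exactly the routine verification being omitted. Your domain worry about $\Tr\rho_{1}+\Tr\rho_{2}>1$ is also resolved by the paper's closing remark that $\overline{\S}^{++}(\C^{n})$ may be replaced by $\overline{\overline{\S}}^{++}(\C^{n})=\{\rho\mid\rho>0\}$ throughout, so the explicit formula applies directly and no rescaling argument is needed.
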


This is a natural property meaning that the inner product structure
of the whole and part is the same, while CPTP monotone metrics (\ref{eq:kuma})
do not have this property. The following two corollaries about monotonicity
and convexity are also natural consequences which CPTP monotone metrics
do not have. 
\begin{cor}
\label{cor:convexity}For any CPTNI monotone metric,
\[
K^{(n)}_{\frac{\rho_{1}+\rho_{2}}{2}}\left(\frac{X_{1}+X_{2}}{2},\frac{X_{1}+X_{2}}{2}\right)\leq\frac{1}{2}\left\{ K^{(n)}_{\rho_{1}}\left(X_{1},X_{1}\right)+K^{(n)}_{\rho_{2}}\left(X_{2},X_{2}\right)\right\} .
\]
\end{cor}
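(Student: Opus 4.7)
My plan is to derive the convexity estimate from Corollary \ref{cor:direct_sum} (direct-sum additivity) together with the CPTNI-monotonicity under a single partial-trace CPTP map. The auxiliary fact I will exploit is the scaling identity $K^{(n)}_{c\rho}(X,Y) = c^{-1} K^{(n)}_{\rho}(X,Y)$ for every $c>0$, which is immediate from the explicit formula \eqref{eq:petz2}: since $\bR_{c\rho}=c\bR_\rho$ while $\bL_{c\rho}\bR_{c\rho}^{-1}=\bL_\rho\bR_\rho^{-1}$, the factor $c$ appears only through $\bR_{c\rho}$ in the outermost position.

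First I would rescale so that the direct sum stays inside $\overline{\S}^{++}$: setting $\rho_i' := \rho_i/2 \in \overline{\S}^{++}(\C^n)$ guarantees $\rho_1' \oplus \rho_2' \in \overline{\S}^{++}(\C^{2n})$ because its trace is at most one, so Corollary \ref{cor:direct_sum} together with the scaling identity yields
\[
K^{(2n)}_{\rho_1' \oplus \rho_2'}(X_1\oplus X_2,\, X_1\oplus X_2) \;=\; 2\,K^{(n)}_{\rho_1}(X_1,X_1) + 2\,K^{(n)}_{\rho_2}(X_2,X_2).
\]

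Next I would identify $\C^{2n} \cong \C^2 \otimes \C^n$ and take $T \in \overline{\Ch}(\C^{2n},\C^n)$ to be the partial trace over the $\C^2$ factor, which is CPTP, hence CPTNI. On block-diagonal operators it acts as $T(A_1 \oplus A_2) = A_1+A_2$, so $T(\rho_1'\oplus\rho_2') = (\rho_1+\rho_2)/2 \in \overline{\S}^{++}(\C^n)$ and $T(X_1\oplus X_2) = X_1+X_2$. Condition (b) of Definition \ref{def:monotone_cptni} with $\sigma=0$ then gives
\[
K^{(n)}_{(\rho_1+\rho_2)/2}(X_1+X_2,\, X_1+X_2) \;\leq\; 2\,K^{(n)}_{\rho_1}(X_1,X_1) + 2\,K^{(n)}_{\rho_2}(X_2,X_2).
\]

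Finally I would use sesquilinearity of the inner product $K^{(n)}_{(\rho_1+\rho_2)/2}$, which pulls an overall factor $1/4$ out of its two arguments on the left-hand side, and this $1/4$ combines with the $2$ on the right-hand side to produce the claimed factor $1/2$. The only mild obstacle is the bookkeeping forced by the rescaling $\rho_i \mapsto \rho_i/2$, which is necessary to apply Corollary \ref{cor:direct_sum} within the domain $\overline{\S}^{++}$; no further ingredient beyond the characterization theorem and the direct-sum additivity is required.
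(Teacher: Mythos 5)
Your proof is correct and follows essentially the same route as the paper: both combine the scaling property of the metric (which the paper uses in the form $\tfrac12 K^{(n)}_{\rho}(X,X)=K^{(n)}_{\rho/2}(X/2,X/2)$), the direct-sum additivity of Corollary \ref{cor:direct_sum}, and monotonicity under the partial trace sending $\rho_1'\oplus\rho_2'$ to $\tfrac{\rho_1+\rho_2}{2}$. The only difference is bookkeeping — you rescale only the states and pull the factor $1/4$ out of the arguments at the end, while the paper rescales states and tangent vectors together from the start.
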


\begin{proof}
By using the above corollary and a partial trace, for $\rho_{1},\rho_{2}\in\overline{\S}^{++}(\C^{n})$
and $X_{1},X_{2}\in\B(\C^{n})$, 
\begin{align*}
\frac{1}{2}\left\{ K^{(n)}_{\rho_{1}}\left(X_{1},X_{1}\right)+K^{(n)}_{\rho_{2}}\left(X_{2},X_{2}\right)\right\}  & =K^{(n)}_{\frac{\rho_{1}}{2}}\left(\frac{X_{1}}{2},\frac{X_{1}}{2}\right)+K^{(n)}_{\frac{\rho_{2}}{2}}\left(\frac{X_{2}}{2},\frac{X_{2}}{2}\right)\\
 & =K^{(n)}_{\frac{\rho_{1}\oplus\rho_{2}}{2}}\left(\frac{X_{1}\oplus X_{2}}{2},\frac{X_{1}\oplus X_{2}}{2}\right)\\
 & \geq K^{(n)}_{\frac{\rho_{1}+\rho_{2}}{2}}\left(\frac{X_{1}+X_{2}}{2},\frac{X_{1}+X_{2}}{2}\right).
\end{align*}
\end{proof}
\begin{cor}
\label{cor:monotonicity}For any CPTNI monotone metric $K^{(n)}_{\rho}$,
a function $\rho\mapsto K^{(n)}_{\rho}(X,X)$ is monotonically decreasing
and convex with respect to $\rho\in\overline{\S}(\C^{n})$. 
\end{cor}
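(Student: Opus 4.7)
The plan is to deduce both properties directly from the additive-noise monotonicity built into Definition \ref{def:monotone_cptni}(b) and from the convexity already established in Corollary \ref{cor:convexity}; the only genuinely new work is extending the inequalities from $\overline{\S}^{++}(\C^n)$ to $\overline{\S}(\C^n)$.

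For the monotonicity statement, I would first prove it on the interior $\overline{\S}^{++}(\C^n)$. Given $\rho,\rho' \in \overline{\S}^{++}(\C^n)$ with $\rho \leq \rho'$, I set $\sigma := \rho' - \rho \geq 0$ and take $T = \mathrm{id}$ in condition (b), which is obviously a CPTNI map. Since $T(\rho)+\sigma = \rho' \in \overline{\S}^{++}(\C^n)$ the hypothesis is satisfied, and the inequality in (b) reads $K^{(n)}_{\rho'}(X,X) \leq K^{(n)}_\rho(X,X)$. This is exactly the desired monotonicity.

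For convexity, I would apply Corollary \ref{cor:convexity} with $X_1 = X_2 = X$ to obtain the midpoint inequality
\[
K^{(n)}_{\frac{\rho_1+\rho_2}{2}}(X,X) \leq \tfrac{1}{2}\bigl[K^{(n)}_{\rho_1}(X,X)+K^{(n)}_{\rho_2}(X,X)\bigr].
\]
Iterating this midpoint inequality yields $K^{(n)}_{\lambda\rho_1+(1-\lambda)\rho_2}(X,X) \leq \lambda K^{(n)}_{\rho_1}(X,X)+(1-\lambda)K^{(n)}_{\rho_2}(X,X)$ for every dyadic rational $\lambda \in [0,1]$. Since the main theorem characterizes $K^{(n)}_\rho$ through an operator monotone function $f$, which is automatically continuous, the function $\rho \mapsto K^{(n)}_\rho(X,X)$ is continuous on $\overline{\S}^{++}(\C^n)$, so the dyadic inequality extends to every $\lambda \in [0,1]$ by standard approximation.

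Finally, to extend from $\overline{\S}^{++}(\C^n)$ to all of $\overline{\S}(\C^n)$, I would set, for $\rho \in \overline{\S}(\C^n)$ possibly singular,
\[
K^{(n)}_\rho(X,X) := \lim_{\epsilon \downarrow 0} K^{(n)}_{\rho_\epsilon}(X,X), \qquad \rho_\epsilon := \tfrac{1}{1+\epsilon n}(\rho + \epsilon I) \in \overline{\S}^{++}(\C^n),
\]
with the value $+\infty$ allowed. The interior monotonicity proved above shows that the map $\epsilon \mapsto K^{(n)}_{\rho_\epsilon}(X,X)$ is nonincreasing in $\epsilon$, so the limit exists in $[0,\infty]$ and is independent of the particular regularization used (any two regularizations can be bracketed against each other). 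Monotonicity and convexity then pass to the limit pointwise: for $\rho \leq \rho'$ in $\overline{\S}(\C^n)$ one compares the regularizations $\rho_\epsilon \leq \rho'_\epsilon$ and lets $\epsilon \downarrow 0$, and similarly for convex combinations.

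The one place requiring care, and the main obstacle, is the limiting argument on the boundary where $\rho$ is singular and $K^{(n)}_\rho(X,X)$ may equal $+\infty$; the monotonicity of $K^{(n)}_{\rho_\epsilon}(X,X)$ in $\epsilon$ (which itself uses the interior monotonicity just proved) is what makes the extension canonical and makes the two desired inequalities survive the passage to the limit in $[0,\infty]$.
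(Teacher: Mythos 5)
Your interior arguments are exactly the paper's: monotonicity is the observation that $T=\mathrm{id}$ together with the noise $\sigma=\rho'-\rho$ in condition (b) of Definition \ref{def:monotone_cptni} gives $K^{(n)}_{\rho'}(X,X)\le K^{(n)}_{\rho}(X,X)$, and convexity is read off from Corollary \ref{cor:convexity} with $X_{1}=X_{2}=X$; the paper declares both steps obvious. Your upgrade from the midpoint inequality to general weights via dyadic iteration and continuity is legitimate (continuity is available from the characterization theorem), though one can also simply rerun the proof of Corollary \ref{cor:convexity} with weights $\lambda,1-\lambda$ using the homogeneity $K^{(n)}_{c\rho}(cX,cX)=c\,K^{(n)}_{\rho}(X,X)$.

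The boundary-extension step you add, however, contains a concrete error. With $\rho_{\epsilon}=\frac{1}{1+\epsilon n}(\rho+\epsilon I)$ the family is not operator-monotone in $\epsilon$: one computes $\frac{d}{d\epsilon}\rho_{\epsilon}=\frac{I-n\rho}{(1+\epsilon n)^{2}}$, which fails to be positive semidefinite whenever $\rho$ has an eigenvalue exceeding $1/n$ (e.g.\ $\rho={\rm Diag}(0.9,0)$ for $n=2$). Hence the interior monotonicity does not give that $\epsilon\mapsto K^{(n)}_{\rho_{\epsilon}}(X,X)$ is nonincreasing, and your argument for existence of the limit and its independence of the regularization breaks down as written. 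If you insist on treating singular $\rho$, regularize with $\rho+\epsilon I$, which is genuinely increasing in $\epsilon$; its trace may exceed one, but the remark at the end of Section 2 allows replacing $\overline{\S}^{++}(\C^{n})$ by the set of all strictly positive operators, so this is harmless. That said, the paper's $K^{(n)}_{\rho}$ is only defined for strictly positive $\rho$ and its one-line proof attempts no boundary extension; the corollary is best read as asserting the two inequalities wherever both sides are defined, and for that reading your first two steps already suffice and match the paper.
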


\begin{proof}
The monotonicity is obvious from the definition of CPTNI monotone
metric. The convexity is also obvious form the above corollary. 
\end{proof}
In this section, we consider CPTNI monotone metric $K^{(n)}_{\rho}$ for
physical unnormalized state $\rho$ which is restricted to $\overline{\S}^{++}(\C^{n})=\left\{ \rho\in\B(\H)\mid\Tr\rho\leq1,\,\rho>0\right\} $.
However, it is easy to generalize $\overline{\S}^{++}(\C^{n})$ to $\overline{\overline{\S}}^{++}(\C^{n}):=\left\{ \rho\in\B(\H)\mid\rho>0\right\} $.
In fact, $\overline{\S}^{++}(\C^{n})$ can be replaced by $\overline{\overline{\S}}^{++}(\C^{n})$
in this section without any restrictions.

\section{Conclusion}

In the present paper, we introduced CPTNI monotone metrics which have
monotonicity under CPTNI maps and additive noise. It is a natural
generalization of quantum monotone metrics introduced by Petz which
have monotonicity under CPTP maps. We prove the CPTNI monotone metrics
can be characterized only by operator monotone functions from few
assumptions without assuming continuities of metrics. It was shown
that CPTNI monotone metrics have some natural properties such as additivity
of direct sum (Corollary \ref{cor:direct_sum}), convexity (Corollary
\ref{cor:convexity}), monotonicity with respect to unnormalized state (Corollary
\ref{cor:monotonicity}). These properties did not appear in monotone
metrics based on CPTP maps.

\appendix

\section{Operator monotone and operator concave functions\label{sec:ope_monotone}}

A function $f:(0,\infty)\rightarrow(0,\infty)$ is said to be operator
monotone if for all positive operators $A$ and $B$, $0<A\leq B$
implies $0<f(A)\leq f(B)$. 
\begin{thm}
For a function $f:(0,\infty)\rightarrow(0,\infty)$, the following
statements are equivalent:
\begin{description}
\item [{(i)}] $f$ is operator monotone. 
\item [{(ii)}] 
$\bar{f}(C^{*}AC)\geq C^{*}\bar{f}(A)C$ for any 
positive
operator $A$, and any operator $C$ such that $\left\Vert C\right\Vert \leq1$,
where $\bar{f}:[0,\infty) \to [0,\infty)$ is a function such that $\bar{f}(x)=f(x)$ for $x>0$ and $\bar{f}(0)=\limsup_{\epsilon \searrow 0}f(\epsilon)$. 
\item [{(iii)}] $f$ is operator concave, i.e., $f(pA+(1-p)B)\geq pf(A)+(1-p)f(B)$
for any strictly positive Hermitian operators $A,B$ and any real
number $0\leq p\leq1$. 
\end{description}
\end{thm}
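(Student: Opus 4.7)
The plan is to establish the three-way equivalence by proving the cycle (iii) $\Rightarrow$ (i), (i) $\Rightarrow$ (iii), (ii) $\Rightarrow$ (iii), and (iii) $\Rightarrow$ (ii), so that each of (i) and (ii) is equivalent to (iii). For (iii) $\Rightarrow$ (i), I would take $0 < A \leq B$, first perturb $B \leadsto B + \delta I$ so that $B + \delta I - A$ becomes strictly positive, and then use the convex decomposition $B + \delta I = (1-\epsilon)\cdot\tfrac{B + \delta I - A}{1-\epsilon} + \epsilon \cdot \tfrac{A}{\epsilon}$ with $\epsilon \in (0,1)$. Operator concavity together with positivity of $f$ gives $f(B + \delta I) \geq \epsilon f(A/\epsilon)$; letting $\epsilon \to 1^-$ and $\delta \to 0^+$, using continuity of $f$ on $(0,\infty)$ (which follows from scalar concavity), yields $f(A) \leq f(B)$. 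For the converse (i) $\Rightarrow$ (iii), I would invoke L\"owner's integral representation
\[
f(x) = \alpha + \beta x + \int_0^\infty \frac{xs}{x+s}\,d\mu(s),
\]
valid for every operator monotone $f : (0,\infty) \to (0,\infty)$ with $\alpha, \beta \geq 0$ and a positive measure $\mu$; each summand is operator concave (the parallel sum $xs/(x+s)$ by a classical computation, the constant and linear parts trivially), so $f$ is too, as a positive combination.

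The equivalence (ii) $\Leftrightarrow$ (iii) is handled by dilation arguments. For (ii) $\Rightarrow$ (iii), I would apply (ii) to the isometry $V\xi = \sqrt{p}\,\xi \oplus \sqrt{1-p}\,\xi$ from $\mathcal{H}$ into $\mathcal{H}\oplus\mathcal{H}$, acting on the block operator $A \oplus B$; the identities $V^*(A \oplus B)V = pA + (1-p)B$ and $V^*\bar{f}(A \oplus B)V = pf(A) + (1-p)f(B)$ then give operator concavity at once. For (iii) $\Rightarrow$ (ii), I would employ Hansen's classical unitary dilation: with $D = (I - C^*C)^{1/2}$ and $D' = (I - CC^*)^{1/2}$, form the two unitaries $U_{\pm}$ on $\mathcal{H}\oplus\mathcal{H}$ whose diagonal blocks are $C$ and $C^*$ and whose off-diagonal blocks are $\pm D'$ and $\mp D$. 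Setting $\tilde{A} = A \oplus 0$, one verifies $\tfrac{1}{2}(U_+^*\tilde{A}U_+ + U_-^*\tilde{A}U_-) = (C^*AC) \oplus (D'AD')$; applying operator concavity of $\bar{f}$ to this averaged argument, together with unitary invariance and $\bar{f}(\tilde{A}) = \bar{f}(A) \oplus \bar{f}(0)I$, and reading off the $(1,1)$ block gives $\bar{f}(C^*AC) \geq C^*\bar{f}(A)C + \bar{f}(0)D^2 \geq C^*\bar{f}(A)C$, the last inequality using $\bar{f}(0) \geq 0$.

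The main technical input is L\"owner's integral representation in the (i) $\Rightarrow$ (iii) step, a deep classical result I would simply cite from \cite{Bhatia}; the remaining steps are straightforward manipulations of block matrices. A secondary subtlety is the treatment of the boundary value $\bar{f}(0)$ in the Hansen dilation, which contributes an extra nonnegative term $\bar{f}(0)(I - C^*C)$ to the right-hand side; the hypothesis $\bar{f}(0) \geq 0$ is precisely what allows this term to be discarded. Rectangular contractions $C$, should they arise, can be reduced to the square case by embedding the domain and codomain into a common enclosing Hilbert space and extending $C$ by zero.
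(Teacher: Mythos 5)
Your proposal is correct, but it takes a genuinely different route from the paper for the hard implication. The paper closes the cycle (i)$\Rightarrow$(ii)$\Rightarrow$(iii)$\Rightarrow$(i): its (i)$\Rightarrow$(ii) is an elementary, self-contained argument (reduce to $0\leq C\leq I$ by singular value decomposition, dilate with the unitary $\begin{pmatrix}C & D\\ D & -C\end{pmatrix}$, $D=\sqrt{I-C^{2}}$, and handle the boundary by $\epsilon$-perturbations and the auxiliary function $f^{+}$), and its (ii)$\Rightarrow$(iii) and (iii)$\Rightarrow$(i) coincide essentially with your isometry/convex-decomposition steps. You instead prove four implications: (i)$\Rightarrow$(iii) by importing L\"owner's integral representation $f(x)=\alpha+\beta x+\int xs/(x+s)\,d\mu(s)$, and (iii)$\Rightarrow$(ii) by Hansen's two-unitary dilation with the extra term $\bar{f}(0)(I-C^{*}C)\geq 0$. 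Both are classical and sound; what you lose is self-containedness (the paper deliberately avoids citing L\"owner's deep theorem and proves everything from scratch, which matters since the appendix is meant as a brief self-contained account), and what you gain is a cleaner conceptual picture plus a direct proof of (iii)$\Rightarrow$(ii), which the paper only gets implicitly through the cycle. One small point to tighten: in the Hansen step you apply operator concavity of $\bar{f}$ to singular arguments such as $A\oplus 0$ and $(C^{*}AC)\oplus(D'AD')$, whereas (iii) is stated only for strictly positive operators; you should note that the inequality extends to positive semidefinite arguments by replacing $\tilde{A}$ with $\tilde{A}+\epsilon I$ and letting $\epsilon\searrow 0$ (using that scalar concavity and positivity force $f$ to be nondecreasing, so $f(\epsilon)\to\bar{f}(0)$ and the limit defining $\bar{f}$ exists) --- the paper faces the same boundary issue and handles it with its $f^{+}$ device.
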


\begin{proof}
(ii)$\Rightarrow$(iii): Consider operators $X=\begin{pmatrix}A & 0\\
0 & B
\end{pmatrix}$, $U=\begin{pmatrix}\sqrt{p}I & \sqrt{1-p}I\\
\sqrt{1-p}I & -\sqrt{p}I
\end{pmatrix}$, $P=\begin{pmatrix}I & 0\\
0 & 0
\end{pmatrix}$. Because $\left\Vert UP \right\Vert \leq1$,
\[
\begin{pmatrix}\bar{f}\left(pA+(1-p)B\right) & 0\\
0 & \bar{f}(0)
\end{pmatrix}=\bar{f}(PUXUP)\geq PU\bar{f}(X)UP=\begin{pmatrix}p\bar{f}(A)+(1-p)\bar{f}(B) & 0\\
0 & 0
\end{pmatrix}.
\]
This proves $\bar{f}$ and $f$ are operator concave.

(iii)$\Rightarrow$(i): For operators $A$ and $B$ such that $0<A\leq A+B$,
\begin{align*}
f(A+B) & =f(p\frac{1}{p}A+(1-p)\frac{1}{1-p}B)\geq pf\left(\frac{1}{p}A\right)+(1-p)f\left(\frac{1}{1-p}B\right)\\
 & \geq pf\left(\frac{1}{p}A\right).
\end{align*}
Because every concave function is continuous, $\lim_{p\to1}pf\left(\frac{1}{p}A\right)=f(A)$.
This proves $f(A+B)\geq f(A)$. 

(i)$\Rightarrow$(ii): Without loss of generality, we can assume $0\leq C\leq I$
because any operator $C$ such that $\left\Vert C\right\Vert \leq1$
has a singular value decomposition $C=SW$ with $0\leq S\leq I$ and
an unitary operator $W$. Let $U=\begin{pmatrix}C & D\\
D & -C
\end{pmatrix}$ be an unitary operator with $D=\sqrt{I-C^{2}}$. For any real number
$\epsilon>0$, there exists $\mu>0$ such that
\[
\begin{pmatrix}CAC+2\epsilon I & 0\\
0 & \mu I
\end{pmatrix}\ge\begin{pmatrix}CAC+\epsilon I & CAD\\
DAC & DAD+\epsilon I
\end{pmatrix}=U\begin{pmatrix}A+\epsilon I & 0\\
0 & \epsilon I
\end{pmatrix}U.
\]
Then
\begin{align*}
\begin{pmatrix}f(CAC+2\epsilon I) & 0\\
0 & f(\mu I)
\end{pmatrix} & \ge U\begin{pmatrix}f(A+\epsilon I) & 0\\
0 & f(\epsilon I)
\end{pmatrix}U\ge U\begin{pmatrix}f(A+\epsilon I) & 0\\
0 & 0
\end{pmatrix}U\\
& =\begin{pmatrix}Cf(A+\epsilon I)C & Cf(A+\epsilon I)D\\
Df(A+\epsilon I)C & Df(A+\epsilon I)D
\end{pmatrix}.
\end{align*}
Therefore
\[
f^{+}(CAC)\geq Cf^{+}(A)C,
\]
where $f^{+}(x)=\limsup_{\epsilon \searrow 0} f(x+\epsilon)$ for $x \in [0,\infty)$.
Due to the proof of  (ii)$\Rightarrow$(iii), $f^+$ is operator concave and continuous at every $x>0$. 
Hence, $f^{+}(x)=f(x)$ for $x>0$. 
\end{proof}
\begin{thm}
If a function $f:(0,\infty)\rightarrow(0,\infty)$ is operator monotone,
\[
f^{\perp}(x)=x/f(x)
\]
 and 
\[
f'(x)=xf(1/x)
\]
 are also operator monotone. 
\end{thm}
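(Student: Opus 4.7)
The plan is to reduce the two claims to a single operator-monotonicity statement. Observe first the identity $f'(x)=xf(1/x)=\frac{1}{(1/x)/f(1/x)}=\frac{1}{f^{\perp}(1/x)}$, so that $f'$ is obtained from $f^{\perp}$ by the transformation $g\mapsto 1/g(1/\cdot)$. This transformation preserves operator monotonicity essentially for free: if $g:(0,\infty)\to(0,\infty)$ is operator monotone and $0<X\leq Y$, then $Y^{-1}\leq X^{-1}$, hence $g(Y^{-1})\leq g(X^{-1})$ by monotonicity of $g$, and inverting once more reverses the inequality to $g(X^{-1})^{-1}\leq g(Y^{-1})^{-1}$. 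Applied to $g=f^{\perp}$, this reduces everything to proving that $f^{\perp}$ alone is operator monotone.

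For the main step I would invoke characterization (ii) of the preceding appendix theorem, namely $f(C^{*}AC)\geq C^{*}f(A)C$ for positive $A$ and contractions $C$. Given $0<X\leq Y$, the economical choice is $C=Y^{-1/2}X^{1/2}$: its Gram matrix $C^{*}C=X^{1/2}Y^{-1}X^{1/2}$ is unitarily similar to $Y^{-1/2}XY^{-1/2}\leq I$, so $\|C\|\leq 1$; moreover a direct computation gives $C^{*}YC=X$. Applying (ii) with $A=Y$ yields $f(X)=f(C^{*}YC)\geq C^{*}f(Y)C$. Both sides are strictly positive definite (since $f>0$ on $(0,\infty)$ and $C$ is invertible), so inversion reverses the inequality, giving $f(X)^{-1}\leq X^{-1/2}Y^{1/2}f(Y)^{-1}Y^{1/2}X^{-1/2}$. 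Conjugating by $X^{1/2}$ on both sides and using that $X$ commutes with $f(X)$ (spectral theorem) delivers exactly $f^{\perp}(X)=X^{1/2}f(X)^{-1}X^{1/2}\leq Y^{1/2}f(Y)^{-1}Y^{1/2}=f^{\perp}(Y)$, and then operator monotonicity of $f'$ follows from the reduction in the first paragraph.

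The only genuine obstacle is spotting the right contraction $C$ so that condition (ii) produces the desired inequality. The choice $C=Y^{-1/2}X^{1/2}$ is forced by two competing demands: it must be a contraction, so that (ii) is applicable, and it must sandwich $Y$ back into $X$, so that the left-hand side of (ii) becomes $f(X)$. Its invertibility, which comes from the strict positivity of $X$ and $Y$, is what then licenses the inversion step. Everything else is routine: the identity $X/f(X)=X^{1/2}f(X)^{-1}X^{1/2}$ from the spectral theorem, and the elementary fact that inversion reverses order on strictly positive operators.
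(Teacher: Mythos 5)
Your proof is correct and follows essentially the same route as the paper: the same contraction $C=Y^{-1/2}X^{1/2}$ combined with the transformer inequality $f(C^{*}YC)\geq C^{*}f(Y)C$ from condition (ii), followed by inversion and conjugation by $X^{1/2}$ to get monotonicity of $f^{\perp}$, and the identity $f'(x)=1/f^{\perp}(1/x)$ to handle $f'$. You merely spell out the inversion/reduction steps that the paper leaves implicit.
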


\begin{proof}
Let $A,B$ be positive operators such that $0<A\leq B$. Because a
operator $C=B^{-\frac{1}{2}}A^{\frac{1}{2}}$ satisfies $\left\Vert C\right\Vert \leq1$,
\[
f(A)=f(C^{*}BC)\geq C^{*}f(B)C=A^{\frac{1}{2}}B^{-\frac{1}{2}}f(B)B^{-\frac{1}{2}}A^{\frac{1}{2}}
\]
due to the above theorem. Therefore
\[
B^{\frac{1}{2}}f(B)^{-1}B^{\frac{1}{2}}\geq A^{\frac{1}{2}}f(A)^{-1}A^{\frac{1}{2}}.
\]
This proves $f^{\perp}$ is operator monotone. Further, $f'$ is also
operator monotone because $f'(x)=1/f^{\perp}(1/x)$. 
\end{proof}

\section{Operator means\label{sec:ope_mean}}

Operator mean of strictly positive operators $A$ and $B$ with respect
to an operator monotone function is defined by
\[
Am_{f}B=A^{1/2}f\left(A^{-1/2}BA^{-1/2}\right)A^{1/2}.
\]
Later, this is extended to non-negative operators.
\begin{thm}
For strictly positive operators $A,B$ on a Hilbert space $\C^{n},$
$A\,m_{f}B=B\,m_{f'}A$.
\end{thm}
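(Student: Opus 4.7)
The plan is to carry out a direct calculation, whose engine is the standard intertwining identity $g(TT^{*})T = T g(T^{*}T)$ for any continuous $g$ and any operator $T$. The natural choice here is $T = A^{-1/2}B^{1/2}$, which is invertible because $A,B>0$, and which satisfies $TT^{*} = A^{-1/2}BA^{-1/2}$ and $T^{*}T = B^{1/2}A^{-1}B^{1/2}$. Note that the argument $Y := B^{-1/2}AB^{-1/2}$ of $f'$ in $B\,m_{f'}A$ is exactly $(T^{*}T)^{-1}$, so $f'(Y) = Yf(Y^{-1}) = Yf(T^{*}T)$. This is what links the two sides.

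First I would expand
\[
B\,m_{f'}A \;=\; B^{1/2}f'(Y)B^{1/2} \;=\; B^{1/2}\,Y\,f(Y^{-1})\,B^{1/2}
\]
using the definition $f'(x) = xf(1/x)$, and simplify $B^{1/2}Y = AB^{-1/2}$. Next I would rewrite $AB^{-1/2} = A^{1/2}(T^{*})^{-1}$ and $B^{1/2} = T^{*}A^{1/2}$, which are immediate from $T^{*} = B^{1/2}A^{-1/2}$, to obtain
\[
B\,m_{f'}A \;=\; A^{1/2}(T^{*})^{-1}\,f(T^{*}T)\,T^{*}\,A^{1/2}.
\]

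Then I would invoke the intertwining identity in the form $f(T^{*}T)T^{*} = T^{*}f(TT^{*})$, so that the $T^{*}$ and $(T^{*})^{-1}$ collapse against each other and we are left with
\[
B\,m_{f'}A \;=\; A^{1/2}f(TT^{*})A^{1/2} \;=\; A^{1/2}f(A^{-1/2}BA^{-1/2})A^{1/2} \;=\; A\,m_{f}B.
\]

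The main obstacle is purely bookkeeping: keeping the square roots and inverses straight, and verifying that the intertwining identity genuinely applies to the operator monotone function $f$ (it does, by the usual functional-calculus argument, since $T^{*}$ intertwines the unitarily equivalent nonzero spectral parts of $TT^{*}$ and $T^{*}T$, and $T$ being invertible removes any worry about the zero eigenvalue). No convexity or monotonicity property of $f$ is needed here — the statement is a purely algebraic identity that holds for any continuous positive $f$ on $(0,\infty)$.
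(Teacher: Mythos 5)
Your proof is correct. It differs from the paper's only in the technical device used to move $f$ from the argument $A^{-1/2}BA^{-1/2}$ to the argument $B^{-1/2}AB^{-1/2}$: the paper introduces a singular value (polar) decomposition $A^{1/2}B^{-1/2}=S^{1/2}W$ with $S>0$ and $W$ unitary, identifies $A^{-1/2}BA^{-1/2}$ with $S^{-1}$ up to conjugation, and then uses unitary covariance of the functional calculus together with $f'(S)=Sf(S^{-1})$; you instead set $T=A^{-1/2}B^{1/2}$ and invoke the intertwining identity $f(T^{*}T)T^{*}=T^{*}f(TT^{*})$, which for invertible $T$ is immediate from the similarity $T^{*}T=T^{*}(TT^{*})(T^{*})^{-1}$. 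The two devices are essentially interchangeable (the intertwining identity is usually proved by exactly the polar decomposition the paper writes down), so neither argument is stronger than the other; yours avoids introducing the auxiliary unitary $W$ and makes explicit the correct observation that only functional calculus is used, so the identity holds for any continuous $f:(0,\infty)\to(0,\infty)$, with operator monotonicity playing no role. Your algebra checks out: $TT^{*}=A^{-1/2}BA^{-1/2}$, $T^{*}T=B^{1/2}A^{-1}B^{1/2}=Y^{-1}$ with $Y=B^{-1/2}AB^{-1/2}$, $B^{1/2}Y=AB^{-1/2}=A^{1/2}(T^{*})^{-1}$ and $B^{1/2}=T^{*}A^{1/2}$, so
\begin{equation*}
B\,m_{f'}A=A^{1/2}(T^{*})^{-1}f(T^{*}T)T^{*}A^{1/2}=A^{1/2}f(TT^{*})A^{1/2}=A\,m_{f}B,
\end{equation*}
as claimed.
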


\begin{proof}
By using a singular value decomposition $A^{1/2}B^{-1/2}=S^{1/2}W$
with $0 < S$ and an unitary operator $W$,
\begin{align*}
Am_{f}B & =A^{1/2}f\left(A^{-1/2}BA^{-1/2}\right)A^{1/2}\\
 & =B^{1/2}W^{*}S^{1/2}f\left(S^{-1}\right)S^{1/2}WB^{1/2}\\
 & =B^{1/2}W^{*}f'(S)WB^{1/2}\\
 & =B^{1/2}f'(B^{-1/2}AB^{-1/2})B^{1/2}.
\end{align*}
\end{proof}
\begin{thm}
For strictly positive operators $A_{1},A_{2},B_{1},B_{2}$ on a Hilbert
space $\C^{n}$ such that $0<A_{1}\leq A_{2}$ and $0<B_{1}\leq B_{2}$,
\[
A_{1}m_{f}B_{1}\leq A_{2}m_{f}B_{2}.
\]
\end{thm}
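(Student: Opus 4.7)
The plan is to split the bivariate monotonicity into two one-sided monotonicities and chain them, namely to prove
\[
A_{1}m_{f}B_{1}\leq A_{2}m_{f}B_{1}\leq A_{2}m_{f}B_{2}.
\]
This reduces the task to monotonicity in a single argument at a time.

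First I would handle monotonicity in the right argument. Fix $A>0$ and suppose $0<B_{1}\leq B_{2}$. Conjugation preserves the operator order, so $A^{-1/2}B_{1}A^{-1/2}\leq A^{-1/2}B_{2}A^{-1/2}$. Since $f$ is operator monotone, applying $f$ preserves this inequality, and sandwiching by $A^{1/2}$ on each side yields $A\,m_{f}B_{1}\leq A\,m_{f}B_{2}$ directly from the definition $A\,m_{f}B=A^{1/2}f(A^{-1/2}BA^{-1/2})A^{1/2}$.

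Next I would obtain monotonicity in the left argument by appealing to the symmetry $A\,m_{f}B=B\,m_{f'}A$ proved in the first theorem of this appendix, together with the fact established in the preceding theorem that $f'(x)=xf(1/x)$ is again operator monotone whenever $f$ is. The right-argument monotonicity already proved, but applied with $f'$ in place of $f$, gives $B\,m_{f'}A_{1}\leq B\,m_{f'}A_{2}$ whenever $0<A_{1}\leq A_{2}$, which translates via the symmetry back to $A_{1}m_{f}B\leq A_{2}m_{f}B$. Chaining this (with $B=B_{1}$) and the previous inequality (with $A=A_{2}$) delivers the claim.

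There is essentially no obstacle: the statement is a formal consequence of operator monotonicity of $f$, order-preservation of conjugation, and the symmetry $m_{f}\leftrightarrow m_{f'}$ proved just above. The only point that needs a little care is to keep the roles of $A$ and $B$ straight when invoking the symmetry, so that the one-sided monotonicity is applied to the correct argument.
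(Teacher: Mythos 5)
Your proof is correct and follows essentially the same route as the paper: both arguments rest on operator monotonicity of $f$ for the right-hand variable and on the symmetry $A\,m_{f}B=B\,m_{f'}A$ together with operator monotonicity of $f'$ for the left-hand variable, chained as two one-sided monotonicities. The only cosmetic difference is the order of the chain (you increase $A$ first, the paper increases $B$ first), which changes nothing of substance.
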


\begin{proof}
Because $f$ and $f'$ are operator monotone,
\begin{align*}
A_{1}m_{f}B_{1} & =A_{1}^{1/2}f\left(A_{1}^{-1/2}B_{1}A_{1}^{-1/2}\right)A_{1}^{1/2}\\
 & \leq A_{1}^{1/2}f\left(A_{1}^{-1/2}B_{2}A_{1}^{-1/2}\right)A_{1}^{1/2}\\
 & =B_{2}^{1/2}f'\left(B_{2}^{-1/2}A_{1}B{}_{2}^{-1/2}\right)B_{2}^{1/2}\\
 & \leq B_{2}^{1/2}f'\left(B_{2}^{-1/2}A_{2}B{}_{2}^{-1/2}\right)B_{2}^{1/2}=B_{2}m_{f'}A_{2}.
\end{align*}
\end{proof}
Due to this theorem, operator mean of non-negative operators $A$
and $B$ can be defined by
\[
Am_{f}B=\lim_{\epsilon\ssearrow0}(A+\epsilon I)m_{f}(B+\epsilon I)
\]
because this monotonically decreases, as $\epsilon\ssearrow0$, and
the limit exists. 
\begin{thm}
For non-negative operators $A$ and $B$ on $\C^{n}$ and a linear
map $C$ from $\C^{n}$ to $\C^{m}$,
\[
C(A\,m_{f}B)C^{*}\leq(CAC^{*})\,m_{f}(CBC^{*}).
\]
\end{thm}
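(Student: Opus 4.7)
The plan is to reduce the inequality in two steps to a single direct computation that invokes condition (ii) of the operator monotone characterization proved in Appendix~\ref{sec:ope_monotone}.

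First I would treat the strictly positive case, $A,B>0$, under the extra assumption that $CAC^{*}$ (hence $CBC^{*}$) is also strictly positive. Set $X:=A^{-1/2}BA^{-1/2}$ and $Y:=(CAC^{*})^{-1/2}(CBC^{*})(CAC^{*})^{-1/2}$, and introduce the transport operator
\[
S := (CAC^{*})^{-1/2}\,C\,A^{1/2}.
\]
A short computation gives $SS^{*}=I$ (so $S^{*}$ is an isometry and $\|S\|\le 1$) together with $SXS^{*}=Y$. Applying condition (ii) of the theorem in Appendix~\ref{sec:ope_monotone} with the contraction $S^{*}$ in place of $C$, and noting that both $X$ and $SXS^{*}=Y$ are strictly positive so that $\bar f$ and $f$ coincide on their spectra, will yield $f(Y)\ge S\,f(X)\,S^{*}$. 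Conjugating by $(CAC^{*})^{1/2}$ on both sides and using $A\,m_{f}B=A^{1/2}f(X)A^{1/2}$ then produces exactly the desired inequality.

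Next, I would remove the hypothesis that $CAC^{*}$ is strictly positive by an augmentation trick. For $\delta>0$, let $\tilde C:=[\,C,\ \delta I_{m}\,]:\C^{n}\oplus\C^{m}\to\C^{m}$ and set $\tilde A:=A\oplus I_{m}$, $\tilde B:=B\oplus I_{m}$, which are strictly positive and satisfy $\tilde C\tilde A\tilde C^{*}=CAC^{*}+\delta^{2}I_{m}>0$. The previous step applied to the triple $(\tilde C,\tilde A,\tilde B)$, combined with the direct-sum identity $\tilde A\,m_{f}\tilde B=(A\,m_{f}B)\oplus f(1)I_{m}$, yields an inequality from which the desired one for $(C,A,B)$ with $A,B>0$ falls out on letting $\delta\searrow 0$. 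Finally, for general non-negative $A,B$ I would apply the strictly positive case to $A+\epsilon I_{n},B+\epsilon I_{n}$, obtaining
\[
C\bigl((A+\epsilon I)\,m_{f}\,(B+\epsilon I)\bigr)C^{*}\le (CAC^{*}+\epsilon CC^{*})\,m_{f}\,(CBC^{*}+\epsilon CC^{*}),
\]
and then sandwich the right-hand side between $CAC^{*}\,m_{f}\,CBC^{*}$ (via monotonicity) and $(CAC^{*}+\epsilon\|C\|^{2}I_{m})\,m_{f}\,(CBC^{*}+\epsilon\|C\|^{2}I_{m})$, which converges to $CAC^{*}\,m_{f}\,CBC^{*}$ by the defining limit recalled in Section~\ref{sec:ope_mean}.

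The only conceptually nontrivial step is identifying the transport operator $S$ in the first step: once $S$ is found, the heavy lifting is done entirely by condition (ii) of Appendix~\ref{sec:ope_monotone}. Everything else will be routine bookkeeping---keeping the two perturbations $\delta$ and $\epsilon$ straight so that each limit can be justified using monotonicity of the operator mean in both arguments.
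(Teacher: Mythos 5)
Your proof is correct, but it takes a genuinely different route from the paper's at the key step. You obtain the inequality directly in the non-degenerate case ($A,B>0$, $CAC^{*}>0$) from condition (ii) of the appendix theorem, applied with the co-isometry $S=(CAC^{*})^{-1/2}CA^{1/2}$ (indeed $SS^{*}=I$, $SXS^{*}=Y$, and conjugating $f(Y)\ge Sf(X)S^{*}$ by $(CAC^{*})^{1/2}$ gives the claim); this is the classical Kubo--Ando transformer-inequality argument. The paper instead first proves an exact \emph{identity} $C(A\,m_{f}B)C^{*}=(CAC^{*})\,m_{f}(CBC^{*})$ for $A,B>0$ and \emph{invertible} $C$, via the singular value decomposition $CA^{1/2}=WS^{1/2}$, and then handles general $A,B\geq 0$ and arbitrary $C$ by embedding everything into $\B(\C^{l})$, regularizing $C=W_{C}S_{C}$ to $W_{C}(S_{C}+\epsilon I)$, and using monotonicity of the mean with a $\delta(\epsilon)$ bound before passing to the limit. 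What your approach buys is that the core inequality needs no invertibility of $C$ and no equality lemma, at the price of invoking the characterization (ii); note that (ii) is stated (and proved in the appendix) for square contractions, so for your rectangular $S^{*}:\C^{m}\to\C^{n}$ you should add one sentence padding $S$ and $X$ with zero blocks into $\B(\C^{l})$, $l=\max(n,m)$, and reading off the relevant corner of the resulting block inequality --- routine, since $\bar f(0)\ge 0$. What the paper's route buys is that it never needs condition (ii) at all, only functional-calculus algebra plus the monotonicity theorem for means. Your two limiting steps are sound: the $\delta$-augmentation yields $C(A\,m_{f}B)C^{*}\le(CAC^{*}+\delta^{2}I)\,m_{f}(CBC^{*}+\delta^{2}I)$, whose limit is the definition of the mean for non-negative arguments, and the final $\epsilon$-step uses monotonicity of $m_{f}$ for merely non-negative operators, which follows from the strictly positive case by the same defining limit (and is asserted and used in the paper's main text), so nothing essential is missing.
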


\begin{proof}
When $A>0$, $B>0$, and $C$ is invertible, by using a singular value
decomposition $CA^{1/2}=WS^{1/2}$ with $S>0$ and an unitary operator
$W$, 
\begin{align*}
(CAC^{*})\,m_{f}(CBC^{*}) & =(WSW^{*})\,m_{f}(CBC^{*})\\
 & =WS^{1/2}W^{*}f(WS^{-1/2}W^{*}CBC^{*}WS^{-1/2}W^{*})WS^{1/2}W^{*}\\
 & =WS^{1/2}f(S^{-1/2}W^{*}CBC^{*}WS^{-1/2})S^{1/2}W^{*}\\
 & =CA^{1/2}f(A^{-1/2}C^{-1}CBC^{*}C^{*^{-1}}A^{-1/2})A^{1/2}C^{*}\\
 & =CA^{1/2}f(A^{-1/2}BA^{-1/2})A^{1/2}C^{*}=C(A\,m_{f}B)C^{*}.
\end{align*}

When $A\geq0$, $B\geq0$, and $C$ is any linear map, $A,B,C$ can
be embedded to $\B(\C^{l})$ with $l=\max(n,m)$. Therefore we can
assume $A,B,C\in\B(\C^{l})$ without loss of generality. For any $\epsilon>0$,
by using a singular value decomposition $C=W_{C}S_{C}$ with $S_{C}\geq0$
and an unitary operator $W_{C}$, 
\begin{align*}
C(A\,m_{f}B)C^{*} & =\lim_{\epsilon\ssearrow0}W_{C}\left(S_{C}+\epsilon I\right)\left((A+\epsilon I)m_{f}(B+\epsilon I)\right)\left(S_{C}+\epsilon I\right)W_{C}^{*}\\
 & =\lim_{\epsilon\ssearrow0}W_{C}\left(\left\{ \left(S_{C}+\epsilon I\right)(A+\epsilon I)\left(S_{C}+\epsilon I\right)\right\} m_{f}\left\{ \left(S_{C}+\epsilon I\right)(B+\epsilon I)\left(S_{C}+\epsilon I\right)\right\} \right)W_{C}^{*}.
\end{align*}
Here, 
\begin{align*}
\left(S_{C}+\epsilon I\right)(A+\epsilon I)\left(S_{C}+\epsilon I\right) & \leq S_{C}AS_{C}+\delta_{A}(\epsilon)I
\end{align*}
with $\delta_{A}(\epsilon)=\left\Vert \left(S_{C}+\epsilon I\right)(A+\epsilon I)\left(S_{C}+\epsilon I\right)-S_{C}AS_{C}\right\Vert $.
Similarly, $\left(S_{C}+\epsilon I\right)(B+\epsilon I)\left(S_{C}+\epsilon I\right)\leq\delta_{B}(\epsilon)I$.
Let $\delta(\epsilon)=\max\{\delta_{A}(\epsilon),\delta_{B}(\epsilon)\}$.
Then
\begin{align*}
C(A\,m_{f}B)C^{*} & \leq\lim_{\epsilon\ssearrow0}W_{C}\left(\left\{ S_{C}AS_{C}+\delta(\epsilon)I\right\} m_{f}\left\{ S_{C}BS_{C}+\delta(\epsilon)I\right\} \right)W_{C}^{*}\\
 & =\lim_{\epsilon\ssearrow0}\left(\left\{ CAC^{*}+\delta(\epsilon)I\right\} m_{f}\left\{ CBC^{*}+\delta(\epsilon)I\right\} \right)=\left(CAC^{*}\right)\,m_{f}\left(CBC^{*}\right).
\end{align*}
\end{proof}

\end{document}